%
%
%
%

\documentclass[runningheads]{llncs}


\usepackage{amssymb}
\usepackage{nicefrac} 
\setcounter{tocdepth}{3}
\usepackage{graphicx}
\usepackage{csquotes}

\usepackage{amsmath}
\usepackage{dsfont}
\usepackage{listings}
\usepackage{color}
\usepackage[numbers]{natbib}
\makeatletter				
\renewcommand\bibsection	
{
  \section*{\refname
    \@mkboth{\MakeUppercase{\refname}}{\MakeUppercase{\refname}}}
}
\makeatother

\usepackage[top=1.1in, bottom=1.1in, left=1.1in, right=1.1in]{geometry}

\newcommand{\sq}{\hbox{\rlap{$\sqcap$}$\sqcup$}}
\renewcommand{\qed}{\hspace*{\fill}\sq}
\renewenvironment{proof}{\noindent {\bf Proof.}\ }{\qed\par\vskip 4mm\par}

\newcommand{\N}{\mbox{$\mathds{N}$}}

\newcommand{\bigO}{\ensuremath{\mathcal{O}}}
\newcommand{\smallO}{\ensuremath{o}}

\newcounter{ctr}
\setcounter{ctr}{0}

\newcommand{\deftxt}[2]{\def #1 {#2}}

\newcommand{\lem}[2]{
\begin{lemma}\label{#1}
\deftxt{\string \lem:#1txt}{#2}
\string \lem:#1txt
\end{lemma}
}

\renewcommand{\c}{\kappa}
\newcommand{\s}{\mathfrak{s}}

\newcommand{\EQlabel}[1]{\label{#1}} 
\newcommand{\nt}{\notag}
\newcommand{\EQref}[1]{\ref{#1}} 

\newcommand{\I}{\mathcal I}
\newcommand{\U}{\mathcal U}
\newcommand{\R}{\mathcal R}

 \newtheorem{observation}{Observation}




\usepackage{url}

\begin{document}

\mainmatter  

\title{- Awareness and Movement vs. the Spread of Epidemics -\\Analyzing a Dynamic Model for Urban Social/Technological Networks \thanks{eligible for the best student paper award}}

 \titlerunning{The Importance of Movement and Awareness}

%
%
\author{Robert Els\"asser\inst{1}
\and Adrian Ogierman\inst{2}}
%

\institute{University of Paderborn, Institute for Computer Science, F\"urstenallee 11, 33102 Paderborn, Germany, \\Tel.: +49-5251-606692, \email{elsa@upb.de}
\and University of Paderborn, Institute for Computer Science, F\"urstenallee 11, 33102 Paderborn, Germany, \\Tel.: +49-5251-606722, \email{adriano@upb.de}
}

\maketitle

\begin{abstract}
We consider the spread of epidemics in technological and social networks. How do people react? Does awareness and cautious behavior help? We analyze these questions and present a dynamic model to describe the movement of individuals and/or their mobile devices in a certain (idealistic) urban environment. Furthermore, our model incorporates the fact that different locations can accommodate a different number of people (possibly with their mobile devices), who may pass the infection to each other. We obtain two main 
results.
First, we prove that w.r.t. our model at least a small part of the system will remain uninfected even if no countermeasures are taken. The second result shows that with certain counteractions in use, which only influence the individuals' behavior, a  prevalent epidemic can be avoided. The results explain possible courses of a disease, and point out why cost-efficient countermeasures may reduce the number of total infections from a high percentage of the population to a negligible fraction.
\end{abstract}
\begin{keywords}
Epidemic algorithms, power law distribution, disease spreading, ad-hoc networks
\end{keywords}

\newpage
\setcounter{page}{1}
\section{Introduction}\label{intro}
How can I protect myself if an epidemic outbreak occurs? This question 
concerns us all. Individuals protect themselves by avoiding 
contacts to infected people,
companies exploit all possibilities to keep 
their employees viable, and governments try to ensure the public 
health and safety. However, in all these cases one must take 
into account that we live 
in a mobile society, and prohibiting personal contacts between individuals -
which define a so called \textit{social interaction network} - is not desirable.

Since the beginning of time, people inhabited different areas 
and moved with their tribes when needed. Although this behavior 
changed over time, traveling became more and more popular 
for various reasons. Nowadays, even the mobility within a single 
city is extremely high \cite{SubPop09}. Our society relies on delivery systems, 
personal services of different kind, and close co-operation. However, 
mechanisms that are able to control the spread of epidemics (even just within the urban 
area of a large 
city) are quite expensive. Protection systems and an enormous amount of mostly 
expensive antidotes are needed, which may not be available 
when a yet unknown disease appears. Moreover, the immunization of the 
majority of the population is most likely not feasible.

The main question is, how can the society react if an epidemic outbreak occurs? In a 
developed country there is a certain budget for health care. The amount provided by 
this budget is surely finite, but one may assume that it ensures the medication 
of at least a small fraction of the population. Hence, one can ask whether we 
can embank an epidemic by using only our limited resources.

Such a resource may be the use of modern media to change the individuals' 
behavior. One can assume that in a modern civilization, individuals are well 
connected to various types of information sources such as newspaper, television, 
and the Internet. Since these powerful and effective tools are already established 
and often used, a government can easily warn the population. From this time on 
(for the duration of the epidemic) the majority of the informed people 
will act more carefully.


On the technological site, we also have to deal with the problem
of epidemics. Nowadays, plenty of different 
devices provided with microprocessors are guiding us through the day. For many of 
us these became a permanent assistant like smart phones and notebooks. 
Different devices with similar communication interfaces are able 
to form a so called ad-hoc network.
In 
this context Bluetooth plays an important role, and we
refer to the network defined by interactions 
between various devices equipped with this technology as a \textit{technological network}. 
Although Bluetooth is quite useful, it also became more and more attractive to people willing 
to exploit its weaknesses. 
Many different attacks 
were developed and several weaknesses have been exploited since the appearance of this technology 
\cite{LooAlfred09, Dunning10}. 
One of the main problems consists for example in the possibility to crack the PIN needed to 
establish a connection between two devices \cite{ShakedWool05}.
Currently, 
it is even possible to infect e.g. a smart phone even without the active 
assistance of the owner by using malware like worms \cite{Dunning10}. Such a compromised 
device can be used to automatically infect other Bluetooth devices in its current vicinity (and 
in the ad-hoc network to which it belongs). 
Even though in new Bluetooth versions several security issues have been fixed, many devices which are 
in use remain vulnerable. However, the entire replacement of those may take years and result in
huge expenses for consumers.


Motivated by the fact that mobile devices always follow 
the movement pattern of their owners,
we present a model, which describes the dynamic behavior of individuals 
in a certain urban environment. Furthermore, we utilize proper parameters to cover the main 
characteristics described above \cite{EGKM04}. In the rest of this paper, the terminology follows 
the usual notions known from the field of epidemic diseases, and we do not distinguish between 
the spread of epidemics in social interaction and technological networks.
We consider the following models (cf. Section \ref{model_annotation} for a more formal description).

(A) \emph{Emerging country model:} In this model we assume that the medical equipment is rather 
limited. Thus, an epidemic may survive for a long time. Furthermore, we assume that the warning 
possibilities are also very limited. Media like television or the Internet does not exist or provides 
rather small benefit. In other words, this type of model provides only negligible possibilities 
for disease control. The most interesting question is whether it is possible for the population 
to survive despite these circumstances.

(B) \emph{Industrialized country model:} Compared to the emerging country model, we now assume a 
significantly larger budget for disease control. This assumption implies a few facts. First, the 
government is able to cure a large number of individuals simultaneously. Furthermore, the health 
care system is capable of taking people into quarantine to protect the rest of the population, and 
the media is also omnipresent. Thus, 
a message communicated via newspaper, television, or the Internet will be 
received by the whole population within a short time. The main question is 
whether these possibilities provide a positive impact on the embankment of an epidemic.

\subsection{Related Work}\label{related}

One of the most important processes analyzed on social networks (in the usual sense) 
is the spread of diseases. There is plenty of work considering 
epidemiological processes in different scenarios and on various 
networks. In this subsection, we only describe the papers which are 
closely related to our results.

The simplest model of mathematical disease spreading is the so called 
SIR model (see e.g.~\cite{Het00,New03}). The population is divided into three 
categories: susceptible (S), i.e., all individuals which do not have the disease yet
but can become infected, infective (I), i.e., the individuals which have the 
disease and can infect others, and recovered (R), i.e., all individuals which 
recovered and have permanent immunity (or have been removed from the system). 
Most papers model the spread of epidemics using a differential equation based on 
the assumption that any susceptible individual has uniform probability $\beta$ to 
become infected from any infective individual. Furthermore, any infected player 
recovers at some stochastically constant rate $\gamma$. 

This traditional (fully mixed) model can easily be generalized to a 
network. It has been observed that such a case can be modeled by bond percolation
on the underlying graph \cite{Gas83,New02}. Callaway et al.~\cite{CNSW00} considered 
this model on graph classes constructed by the so called configuration model (i.e., 
a random graph with a given degree distribution). 
The SIR model has also been analyzed 
in some other scenarios, including 
various kinds of correlations between the rates of infection or the 
infectivity times, in networks with a more complex structure containing 
different types of vertices \cite{New02}, or in graphs with correlations between 
the degrees of the vertices \cite{MV03}. Interestingly, for certain graphs 
with a power law degree distribution, there is no constant threshold for the
epidemic outbreak as long as the power law exponent is less than $3$ \cite{New03} (which 
is the case in most real world networks, e.g.~\cite{FFF99,AH00,ASBS00,RFI02}). If the network 
is embedded into a low dimensional space, or it has high transitivity, then there 
might exist a non-zero threshold for certain types of correlations between 
vertices. However, none of the papers above considered the dynamic movement 
of individuals, which seems to be the main source of the spread of diseases in urban 
areas \cite{EGKM04}. 

Borgs et al. \cite{Borgs10} focused on how to distribute antidote to control epidemics. The authors analyzed a variant of the contact process in the \textit{susceptible-infected-susceptible (SIS) model} on a finite graph in which the cure rate is allowed to vary from one vertex to the next. That means the rate $\rho_v$ at which an infected node $v$ becomes healthy is proportional to the amount of antidote it received, given a fixed amount of antidote $R=\sum_{x \in V} \rho_x$ for the whole network. The authors studied contact tracing on the star graph and the distribution of the antidote proportional to the node degree on expander graphs and general graphs with bounded average degree as curing mechanisms.
They state that using contact tracing on a star graph would require a total amount of antidote which is super-linear in the number of vertices. Here, contact tracing on a graph means that the cure rate is adjusted to $\rho_v = \rho + \rho' d_v^*(t)$ at every time $t$, where $d_v^*(t)$ denotes the number of infected neighbors of $v$ at time $t$ and $\rho>0$ is some constant. From the point where the number of infected leaves is $\omega(\rho/\beta)$, the epidemic can not be prevented anymore with high probability, even if the amount of antidote is $\beta n^{4/3-\smallO(1)}$, where $\beta$ describes the probability of a node to become infected by a neighbor.
However, setting $\rho_v$ proportional to the degree requires an amount of antidote that scales only linearly with the number of vertices, even on general graphs, provided the average degree is bounded.
On the other side, even if the underlying graph is an expander, then curing proportional to the degree cannot reduce the needed amount of antidote by more than a constant factor.

In \cite{EGKM04}, 
Eubank et al.~modeled physical contact patterns, which result from movement of 
individuals between specific locations, by a dynamic bipartite graph. 
The graph is partitioned into two parts. The first part contains the people
who carry out their daily activities moving between different locations. The 
other part represents the various locations in a certain city. There is an edge 
between two nodes, if the corresponding individual visits a certain location 
at a given time. Certainly, the graph changes dynamically at every time step.

Eubank et al.~\cite{EGKM04, chowell03} analyzed the corresponding network for Portland, 
Oregon. According to their study, the degrees of the nodes 
describing different locations follow a power law distribution with exponent 
around $2.8$\footnote{In \cite{EGKM04} the degree represents the number 
of individuals visiting these places over a time period of 24 hours.}. For many 
epidemics, transmission occurs between individuals being simultaneously 
at the same place, and then people's movement is mainly responsible for 
the spread of the disease.

The authors of \cite{EGKM04} also considered different countermeasures in order 
to avoid an epidemic outbreak. They stated that early detection combined with targeted 
vaccination are effective ways of defense compared to mass vaccination of a population.
However, in most cases it is not possible to find the individuals having many acquaintances,
and in many cases vaccinations cannot even be applied (e.g., SARS, or swine flu in its 
early stages).

In addition to the theoretical papers described above, plenty of simulation work has been done. 
Two of the most popular approaches are the so called agent-based and structured 
meta-population-based, respectively (cf. \cite{VespAgentVsMeta10, Jaffry08}). Both models 
have their advantages and weaknesses. The main idea of the meta-population approach is to model 
whole regions, e.g. georeferenced census areas around airport hubs \cite{VespH1N109}, and connect 
them by a mobility network. Then, within these regions the spread of epidemics is analyzed by 
using the well known mean field theory. In contrast, the agent-based approach models individuals 
with agents in order to simulate their behavior. In this context, the agents may be defined very 
precisely, including e.g. race, gender, educational level, nutritional status, age, priority groups, 
participant class, etc. \cite{Lee08, Lee10}, and thus provide a huge amount of detailed data conditioned on the agents setting. 
Furthermore, these kind of models are able to integrate different locations like schools, theaters 
and so on. Thus, an agent may or may not be infected depending on his own choices and the ones made 
by agents in his vicinity. The main issue about the agent-based approach is the huge amount of 
computational capacity needed to simulate huge cities,  continents or even the world itself 
\cite{VespAgentVsMeta10}. This limitation can be attenuated by reducing the number of agents, 
which then entails a decreasing accuracy of the simulation. In the meta-population approach the 
simulation costs are lower, sacrificing accuracy and some kind of noncollectable data.

To combine the advantages of both systems, hybrid environments were implemented (e.g. \cite{hybrid07}). 
The main idea of such systems is to use an agent-based approach at the beginning of the simulation 
up to some point where a sufficient number of agents are infected. Then, the system switches to a 
meta-population-based approach. Certainly, such a system combines the high accuracy of the 
agent-based simulations at the beginning of the procedure with the faster simulation speed of 
the meta-population-based approach at stages, in which both systems seem to provide similar predictions.

Nonetheless, these kind of simulations confirm the positive impact of 
non-pharmaceutical countermeasures, which is underpinned by examinations 
on real data (e.g. \cite{Markel07}). 
Germann et al.~\cite{Germann06} 
investigated the spread of a pandemic strain of influenza virus through the U.S. population. 
They used publicly available 2000 U.S. Census data to identify seven so-called mixing groups, in 
which each individual may interact with any other member. Each class of mixing group 
is characterized by its own set of age-dependent probabilities for person-to-person transmission 
of the disease.   
They considered different combinations of socially targeted antiviral prophylaxis, dynamic mass 
vaccination, closure of schools and social distancing as countermeasures in use, and simulated 
them with different basic reproductive numbers $R_0$. It turned out that specific combinations 
of the countermeasures have a different influence on the spreading process. 
For example, with $R_0=1.6$ social distancing and travel restrictions did 
not really seem to help, while vaccination limited the number of new symptomatic cases per 10,000 persons from 
$\sim100$ to $\sim1$. With $R_0=2.1$, such a significant impact could only be achieved with the 
combination of vaccination, school closure, social distancing and travel restrictions. 
In \cite{Liu08} Liu et al. examined the influence of two parameters, the decay rate of the disease and the 
range a person can be infected in, on the spread of an epidemic in 
the urban environment of the Haizhu district of Guangzhou. The results imply the importance of both parameters. Especially 
the results of the distance parameter, which is influenced by peoples behavior, imply significant 
impact on the disease spreading if manipulated wisely.

Concerning mobility in a 2D field, Valler et al.~\cite{Valler11} analyzed the epidemic threshold for a mobile ad-hoc network. They showed that 
if the connections between devices is given by a sequence of matrices $A_1, \dots A_T$, then for $\lambda_S 
< 1$ no epidemic outbreak occurs, with high probability, where $\lambda_S$ is the first eigenvalue 
of $\Pi_{i=1}^T (1-\delta)I +\beta A_i$ with $\beta$ and $\delta$ being the virus transmission probability and the 
virus death probability, respectively. They also approximated the epidemic threshold for different mobility models 
in a predefined 2D area, such as random walk, Levy flight, and random waypoint.

However, realistic scenarios do not only consider the spreading process itself. In reality, we are 
influenced by many factors, e.g.~the awareness about an epidemic.
In \cite{FGWJ09}, Funk et al.~analyzed the spread of awareness on epidemic outbreaks. 
That is, the information about a disease is also spread 
in the network, and it has its own dynamic. In \cite{FGWJ09} the authors described 
the two spreading scenarios (awareness vs.~disease) by the following model. 
Each individual has a level of awareness, which depends on the number of hops 
the information has passed before arriving to this individual. This was combined 
with the traditional SIR model. It has been shown that in a well mixed 
population, the spread of awareness can result in a slower size of outbreak, however,
it will not affect the epidemic threshold. Nevertheless, if the spread of information 
about a disease 
is considered as a local effect in the proximity of an outbreak, then awareness can 
completely stop the epidemic. The impact of spreading awareness is even amplified if
the social network of infections and informations overlap.
\subsection{Our Results}\label{results}
%
In our dynamic model we integrate the results of \cite{EGKM04}, and assume that every 
individual chooses a location independently and uniformly at random according to the 
power law degree distribution of the corresponding places. 
This is the first analytical result on the spread of epidemics in a dynamic scenario, 
where the impact of the power law distribution 
describing the attractiveness of different locations in an urban area
is
considered.

First we show that in the \emph{emerging country model} it is very unlikely for a (deadly) 
epidemic to wipe out the whole population. 
This holds due to the decreasing number of survivors and infected people over time. 
That is, the (infected and uninfected) population size is decreasing over time while 
the available space for each individual is not. This implies a decreasing probability for 
two individuals to meet, since the available space is large enough for the healthy individuals 
to avoid the infected ones (cf.~Section \ref{sec_theorem1}).
This provides analytical evidence in our model for a conjecture expressed in a historical documentation about the plague in the mid ages \cite{Plague05}.

Second we show that in the \emph{industrialized country model} the use of news services 
combined with an appropriate health care budget limits the number of infected persons 
to a negligible fraction.
Due to the warnings, 
the population will act more carefully and the 
corresponding power law exponent increases. That is, a noticeable number of people avoid locations 
with plenty of individuals. Simultaneously, the number of accommodated persons in these locations 
decreases, as well as the probability for the remaining visitors to become infected.

The model of this paper seems to be completely different from most of the models considered 
in Subsection \ref{related}. In many of these papers, the authors used the well known mean field theory
to model the spread of epidemics in urban areas (e.g.~census areas around airports \cite{VespH1N109}) or 
within different mixing groups \cite{Germann06}. Then, infected individuals may pass the disease to 
every other member of their community, with a certain probability. In our model, the infection is only 
transmitted between individuals being in the same cell at a certain time step, which implies 
completely different results depending on the distribution of individuals among the cells. 
Concerning the geographic mobility models (e.g.~\cite{Valler11}), there are several interesting 
results w.r.t.~different mobility patterns for the individuals, such as Levy flight or random waypoint.
In all these cases, free movement in a 2D simulation field is considered, where the spacial distribution 
of the random walk or Levy flight mobility models are uniform, and the distribution of nodes in the 
neighborhood of an infected device is Gaussian. In our paper, we try to 
integrate the distribution of the attractiveness of different locations in an urban area, which 
seems to follow a power law distribution \cite{EGKM04}.

The main disadvantage of our model 
is that we do not take the personal preferences of different people into account, and we ignore 
any dependency between certain choices (e.g., a married couple is in many cases together at the 
same place). 
Nevertheless,
since most of the above dependencies in real world are positively correlated\footnote{That is, 
persons who meet at some place will more likely meet 
at (other) places too. On the other side, it is not very likely that 
persons who have not met within a certain time frame, 
will meet each other afterwards.},
the results should even be stronger 
in more realistic scenarios. This also seem to hold for periodic mobility \cite{Valler11}. 
It would also be interesting to integrate the Levy flight model to some extent, although this 
in general does not hold in urban environments \cite{SubPop09}.
Nevertheless, even the simple mobility model analyzed in this paper provides evidence for the positive impact of 
non-pharmaceutical interventions (e.g. school closures) and cautious behavior (public gathering bans as well as 
public warnings), which has already been observed in real world 
studies \cite{Markel07}. 

\section{Model and annotation}\label{model_annotation}
In this section we present the formal model used in this paper. 

\paragraph*{Modeling the environment} To model the environment we use a static \textit{grid structure} 
of size $\sqrt{\c n} \times \sqrt{\c n}$, where $\c$ is a constant. That is, our grid contains $\c n$ so 
called cells. The cells represent physical locations an individual can visit, e.g., a restaurant, 
the office, or a concert hall. Each cell may contain nodes (also called individuals), depending on its so called  
\textit{attractiveness}.
In reality there are different places with varying attractiveness in an urban area \cite{EGKM04}.
The attractiveness $d$ of a cell $v$ is chosen randomly with probability proportional to $1/d^\alpha$, where $\alpha$ is a constant larger than $2$ (according to \cite{EGKM04}, $\alpha \approx 2.8$ for locations in Philadelphia). We bound the highest attractiveness to $\sqrt[\alpha]{\c n}$. Since the objects move randomly, the expected number of nodes contained in a specific cell with attractiveness $d$ is given by $n \cdot d / \sum_{i=2}^{\sqrt[\alpha]{\c n}} c \frac{\c n}{i^\alpha}i$, where $c$ is some normalizing constant such that $\sum_{i=2}^{\sqrt[\alpha]{\c n}} c \frac{\c n}{i^\alpha}  =\c n$.
This scenario describes e.g.~a city with different 
locations, and individuals visiting these locations according 
to their attractiveness.  

\paragraph*{Modeling moving individuals} In the grid structure described above there are 
$n$ \textit{nodes} (or individuals), which move from one cell to another in each step. 
Each node chooses the 
target location in some step independently and with probability proportional to its attractiveness. Now, assume that 
an infection starts to spread among the nodes. To model the spreading process,
we use three different states, which partition the set of nodes in three groups; $\I(j)$ contains the infected nodes in step $j$, $\U(j)$ contains the uninfected (susceptible) nodes in step $j$ (i.e. subject to infection but not infected already) and $\R(j)$ contains the resistant nodes in step $j$ (i.e., nodes which became cured and can not be infected anymore). Whenever it is clear from the context, we simply write $\I$, $\U$, and $\R$, respectively. If at some step $j$, an uninfected node $i$ visits a cell which also contains a node of $\I(j)$, then $i$ becomes infected and carries the disease further\footnote{This model 
can easily be extended to the case, in which the disease is transmitted according to a given probability.}.  

\paragraph*{Possible variations} In our model, there are only three variables which can vary. One is $\alpha$, which represents the 
probability distribution of the attractiveness of different locations. Another is $\c$, which describes the 
total number of cells, i.e., the locations the nodes may visit. In addition, 
the time until an
infected individual is cured again can also vary. That is, a time period $\tau$ is assigned to the epidemic, which means that an individual
is infective for $\tau$ consecutive steps. With $\tau$ very large (i.e., $\omega(\log n)$) we obtain
the model without any recovery.

\paragraph*{Parameters}
In the emerging country model we assume $2 < \alpha < 3, \c = 1$ and $\tau = \smallO(\log n)$. In contrast, due to the countermeasures applied in the industrialized country model, we assume that $\alpha, \c, \tau$ are large constants there. In both models an uninfected node becomes infected with probability $1$, as soon as an infected node is accommodated in the same cell.
\section{Analysis}\label{analysis}
%
We start our analysis with some basic observations.
\begin{observation}\label{lem_cell_d}	
The expected number of nodes choosing a specific cell with attractiveness $d$ is proportional to $d$.
\end{observation}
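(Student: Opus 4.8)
The plan is to unpack the definitions and do a direct expected-value computation, so this "proof" is essentially a one-line normalization argument. Recall that the attractiveness of a cell takes integer values $d \in \{2, 3, \dots, \sqrt[\alpha]{\c n}\}$, and by the stated construction there are exactly $c\,\frac{\c n}{d^\alpha}$ cells of attractiveness $d$, where $c$ is the normalizing constant enforcing $\sum_{i=2}^{\sqrt[\alpha]{\c n}} c\,\frac{\c n}{i^\alpha} = \c n$ (i.e.\ the total cell count). First I would fix one node and observe that it picks its target cell with probability proportional to that cell's attractiveness; hence the probability of landing in one particular cell of attractiveness $d$ equals $d / Z$, where $Z = \sum_{i=2}^{\sqrt[\alpha]{\c n}} \big(c\,\frac{\c n}{i^\alpha}\big)\, i$ is the sum of attractivenesses over all cells (each cell of attractiveness $i$ contributing $i$, weighted by how many such cells there are).

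Next I would sum over the $n$ independent nodes: by linearity of expectation, the expected number of nodes occupying a fixed cell of attractiveness $d$ is $n \cdot d / Z$. Since $Z$ does not depend on the particular cell under consideration, this quantity is exactly $\big(n/Z\big)\cdot d$, i.e.\ proportional to $d$ with proportionality constant $n/Z = n \big/ \sum_{i=2}^{\sqrt[\alpha]{\c n}} c\,\frac{\c n}{i^\alpha}\, i$. This matches the formula already displayed in the environment-modeling paragraph, so the observation follows directly.

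The only thing worth a sentence of care — and arguably the "main obstacle," though it is mild — is making sure the two normalizations are not conflated: $c$ is chosen so that the \emph{number} of cells sums to $\c n$, whereas the denominator $Z$ in the occupancy formula is the sum of \emph{attractiveness values}, which is a different (larger) quantity. I would state explicitly that $Z = c\,\c n \sum_{i=2}^{\sqrt[\alpha]{\c n}} i^{1-\alpha}$ is finite and positive (a constant times $\c n$, since $\alpha > 2$ makes $\sum i^{1-\alpha}$ convergent up to lower-order terms), so that $n/Z = \Theta(1/\c)$ is a well-defined positive proportionality constant independent of $d$. With that clarification the claim is immediate, and no further estimation is needed for this observation.
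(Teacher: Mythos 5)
Your proposal is correct and matches the paper's own treatment: the paper states this observation without a separate proof, relying on the occupancy formula $n\cdot d/\sum_{i=2}^{\sqrt[\alpha]{\c n}} c\,\frac{\c n}{i^\alpha}\,i$ already displayed in the model section, which is exactly the linearity-of-expectation computation you carry out. Your extra remark distinguishing the cell-count normalization $c$ from the attractiveness sum $Z$ is a sensible clarification but does not change the argument.
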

\begin{observation}\label{lem_area_d} 
The probability for a node to choose an arbitrary cell with attractiveness $d$ is proportional to ${d^{-\alpha + 1}}$.
\end{observation}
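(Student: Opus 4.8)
The plan is to read ``the probability for a node to choose an arbitrary cell with attractiveness $d$'' as the probability that a node lands in \emph{some} cell whose attractiveness equals $d$ (not in one prescribed such cell, which would only give a factor proportional to $d$), and then to compute this probability directly from the two-stage random experiment: first each cell is assigned a random attractiveness, then each node picks a target cell with probability proportional to its attractiveness. I would introduce the total attractiveness mass $Z:=\sum_{v} d_v$, where $v$ ranges over all $\c n$ cells and $d_v$ is the attractiveness of $v$. Since a node picks a fixed cell $v$ with probability proportional to $d_v$, it picks $v$ with probability exactly $d_v/Z$; summing over all cells of attractiveness $d$, a node lands in a cell of attractiveness $d$ with probability $\bigl(\#\{v:d_v=d\}\bigr)\cdot d/Z$. (The same computation underlies Observation~\ref{lem_cell_d}: the expected number of nodes in a fixed cell of attractiveness $d$ is $n\,d/Z$, proportional to $d$.)

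Next I would control the two quantities in this expression. By the definition of the attractiveness distribution together with the normalizing constant $c$ (chosen so that $\sum_{i=2}^{\sqrt[\alpha]{\c n}} c\,\c n/i^{\alpha}=\c n$), the expected number of cells of attractiveness $d$ equals $c\,\c n/d^{\alpha}$. For $Z$, linearity of expectation gives
\[
\mathbb{E}[Z]=\sum_{i=2}^{\sqrt[\alpha]{\c n}}\frac{c\,\c n}{i^{\alpha}}\cdot i
 = c\,\c n\sum_{i=2}^{\sqrt[\alpha]{\c n}} i^{\,1-\alpha}
 = \Theta(\c n),
\]
where the last step uses the hypothesis $\alpha>2$: then $1-\alpha<-1$, so the series $\sum_{i\ge 2} i^{\,1-\alpha}$ converges to a constant depending only on $\alpha$. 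A routine concentration argument (each of the $\c n$ cells contributes independently at most $\sqrt[\alpha]{\c n}=\smallO(\c n)$ to $Z$, and likewise for $\#\{v:d_v=d\}$ when its expectation is large) shows $Z=\Theta(\c n)$ and $\#\{v:d_v=d\}=\Theta(\c n/d^{\alpha})$ with high probability. Plugging these in, a node lands in a cell of attractiveness $d$ with probability $\Theta\!\bigl(\frac{\c n\cdot d^{-\alpha}\cdot d}{\c n}\bigr)=\Theta(d^{\,1-\alpha})=\Theta(d^{-\alpha+1})$, which is the claim.

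I do not expect a genuine obstacle here; once the phrase ``arbitrary cell'' is pinned down, the statement is essentially an unwinding of the definitions. The only mild subtlety is the normalization $Z$: one must observe that $Z=\Theta(\c n)$ rather than growing with the upper cutoff $\sqrt[\alpha]{\c n}$, and this is exactly the place where $\alpha>2$ is used (for $\alpha\le 2$ the sum $\sum i^{\,1-\alpha}$ would diverge and the proportionality constant would change). For cleanliness I would phrase the conclusion as ``in expectation, and with high probability, proportional to $d^{-\alpha+1}$,'' matching the level of rigor of Observation~\ref{lem_cell_d}.
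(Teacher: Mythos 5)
Your proof is correct and matches the paper's intended (but unwritten) reasoning: the paper states this as an observation without proof, and the calculation it relies on elsewhere is exactly yours — there are $\Theta(\c n/d^{\alpha})$ cells of attractiveness $d$, each chosen with probability $\Theta(d/(\c n))$ since the normalizer $\sum_{i=2}^{\sqrt[\alpha]{\c n}} c\,\c n\, i^{1-\alpha}=\Theta(\c n)$ for $\alpha>2$, giving $\Theta(d^{-\alpha+1})$ overall. Your reading of ``arbitrary cell'' as ``some cell of that attractiveness'' is the right one, and the added concentration remarks are a harmless strengthening.
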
 

As we can see, 
the number of cells with a high attractiveness decreases with an increasing $\alpha$.
On the other side, 
while $\c$ increases, the area, in which infected nodes may infect other nodes, decreases. Then,
the probability for two nodes
to choose the same cell decreases too. 

\subsection{Emerging country model}\label{sec_theorem1}
One major concern about the breakout of a deadly disease
is that the epidemic may wipe out the majority of the population. 
At the beginning of the outbreak people become infected. Then, they carry the disease and distribute it among 
the individuals they meet at different locations. 
Let us assume a non curable 
course of disease, in which the infected individuals decease after some time period. 
At some time, there is only a small fraction of the population which is still alive, and some 
of them still carry the infection further. We model this situation 
(at some time step $j$) assuming that $|\I(j)| \cdot |\U(j)| 
\leq n^{2 \epsilon}$, for some small constant $\epsilon$.
The most interesting question is whether these nodes manage 
to infect the remaining healthy nodes, and exterminate the whole population.

\begin{lemma}\label{lemma1}
Let $|\I(j)| \cdot |\U(j)| \leq n^{2\epsilon}$, where $\epsilon$ is an arbitrarily small constant, $\c$ is a constant and $\tau$ is a slow growing function 
in $n$ (i.e., $\tau = \smallO (\log n)$). Then in a round there is no newly infected node with probability $1- n^{-\Omega(1)}$.
\end{lemma}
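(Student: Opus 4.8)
The plan is to bound the probability that any infected node shares a cell with any susceptible node in a single round, and then take a union bound over all the pairs. Since $|\I(j)|\cdot|\U(j)| \le n^{2\epsilon}$, there are at most $n^{2\epsilon}$ such (infected, susceptible) pairs. For a fixed pair $(a,b)$ with $a\in\I(j)$ and $b\in\U(j)$, I want to show that the probability they land in the same cell in the next step is at most $n^{-\delta}$ for some constant $\delta$ with $2\epsilon < \delta$; choosing $\epsilon$ small enough (which we are allowed to do) then makes the union bound over all pairs and all $\tau = \smallo(\log n)$ active rounds still $n^{-\Omega(1)}$.

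\textbf{The core collision estimate.}
The heart of the argument is the single-pair collision probability. Each of $a$ and $b$ independently picks a target cell, where a cell of attractiveness $d$ is chosen with probability proportional to $d\cdot d^{-\alpha}=d^{1-\alpha}$ (Observation~\ref{lem_area_d}: the probability of choosing \emph{some} cell of attractiveness $d$ is proportional to $d^{-\alpha+1}$). Conditioning on the (random) attractiveness profile of the grid, the probability that both $a$ and $b$ pick the same specific cell $v$ of attractiveness $d$ is proportional to $(d/S)^2$, where $S=\sum_v d_v = \Theta(\c n)$ is the normalizing mass. Summing over cells, the collision probability is of order $\sum_v d_v^2 / S^2$. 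Since the attractiveness values are bounded by $(\c n)^{1/\alpha}$ and distributed so that the number of cells of attractiveness $d$ is $\Theta(\c n\cdot d^{-\alpha})$, the sum $\sum_v d_v^2 = \Theta\!\big(\c n\sum_{d=2}^{(\c n)^{1/\alpha}} d^{2-\alpha}\big)$. For $2<\alpha<3$ this sum is dominated by its largest term, giving $\sum_v d_v^2 = \Theta\big(\c n\cdot (\c n)^{(3-\alpha)/\alpha}\big)=\Theta\big((\c n)^{3/\alpha}\big)$, hence the per-pair collision probability is $\Theta\big((\c n)^{3/\alpha}/(\c n)^2\big) = \Theta\big((\c n)^{3/\alpha-2}\big)$. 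Because $\alpha>2$ we have $3/\alpha - 2 < -1/2 < 0$, so this is $n^{-\delta}$ for a constant $\delta = 2-3/\alpha > 1/2$ (recall $\c$ is constant).

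\textbf{Handling the randomness of the grid and assembling the bound.}
One subtlety is that the grid's attractiveness profile is itself random, so the estimates above should first be established on a "typical" profile and the bad profiles handled separately. I would show via a concentration argument (Chernoff/Markov on the counts of high-attractiveness cells, or simply bounding the expectation of $\sum_v d_v^2$ and applying Markov) that with probability $1 - n^{-\Omega(1)}$ the profile satisfies $S = \Theta(\c n)$ and $\sum_v d_v^2 = \bigO\big((\c n)^{3/\alpha}\big)$; on this event the per-pair collision bound $n^{-\delta}$ holds. Then a union bound over the at most $n^{2\epsilon}$ pairs, over the at most $\tau$ rounds during which the disease is active, and over the failure event for the grid yields a total failure probability of at most $\tau\cdot n^{2\epsilon}\cdot n^{-\delta} + n^{-\Omega(1)} = n^{-\Omega(1)}$, provided $\epsilon < \delta/2$ — which is legitimate since $\epsilon$ is an arbitrarily small constant and $\delta > 1/2$ is fixed.

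\textbf{Main obstacle.}
The step I expect to be most delicate is making the sum $\sum_v d_v^2$ rigorous: one must be careful that the top of the power-law range (attractiveness near $(\c n)^{1/\alpha}$) genuinely contributes cells — in expectation the number of cells at the maximal attractiveness is $\Theta(1)$, so the dominant term is fragile and needs either a careful truncation (cutting the range at, say, $(\c n)^{1/\alpha}/\log n$, where the expected counts are safely $\omega(\log n)$ and Chernoff applies) or a direct second-moment bound on the collision probability that averages over the grid randomness without conditioning. Either way the exponent $3/\alpha - 2$ is what matters, and since $\alpha > 2$ strictly, it stays bounded away from $-1/2$, giving the claimed $1 - n^{-\Omega(1)}$.
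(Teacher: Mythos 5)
Your proposal is correct and follows essentially the same route as the paper: bound the per-pair collision probability by summing the squared cell-selection probabilities over the power-law attractiveness profile, then kill the at most $n^{2\epsilon}$ (infected, susceptible) pairs by a first-moment/union bound (the paper phrases this last step as a Chernoff bound on the number of new infections with mean $\mu=n^{-\Omega(1)}$, which is equivalent in effect). Your collision exponent $n^{3/\alpha-2}$ is in fact sharper than the paper's cruder bound of order $n^{1/\alpha-1}$ (obtained by replacing one factor of $d$ by $n^{1/\alpha}$), and your attention to concentration of the random attractiveness profile is a refinement the paper elides; both bounds give $\delta>1/2$ and suffice.
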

\begin{proof}
Let a pairing of two nodes $i \in \I(j)$, $i' \in \U(j)$ describe the event that $i$ and $i'$ choose the same cell.
Let $x_{i,i',d}$ be the event that nodes $i$ and $i'$, where $i\neq i'$, choose the same (specific) cell with attractiveness $d$. Then, the probability $Pr(x_{i,i',d})$ is bounded by 
	\begin{align}
		Pr(x_{i,i',d}) &= \left( \frac{d}{\sum_{i=2}^{\sqrt[\alpha]{n}} c \frac{n}{i^\alpha}i} \right)^2 \nt
			\leq \left( \frac{d}{c n \frac{1}{\alpha-2} \left( \frac{1}{2^{\alpha-2}} - \smallO(1) \right)} \right)^2 \nt 
			\leq \left( \frac{d(\alpha-2)2^{3-\alpha}}{c n} \right)^2. \nt
	\end{align}
Further, let $x_{i,i'}$ be the event that node $i$ and $i'$ meet in an arbitrary cell. Then,  
$Pr(x_{i,i'}) \leq \sum\limits_{d=2}^{\sqrt[\alpha]{n}} c \frac{\c n}{d^{\alpha} }
Pr(x_{i,i',d})$, and we obtain
	\begin{align}
		Pr(x_{i,i'}) &\leq \left( \frac{(\alpha - 2) 2^{3-\alpha}}{c} \right)^2 \sum\limits_{i=2}^{\sqrt[\alpha]{n}} \frac{c \c n}{i^\alpha} \frac{i}{n} \frac{\sqrt[\alpha]{n}}{n} 
		\leq \frac{\c (\alpha - 2) \left( 2^{3-\alpha} \right)^2}{ c n^{1-1/\alpha} } (1+\smallO(1)). \nt
	\end{align}

Now, a fixed uninfected node $i'$ becomes infected with probability at most 
$|\I(j)|Pr(x_{i,i'})$ and the expected number of newly infected nodes is bounded by 
	\begin{align}
		\mu &\leq |\U(j)| |\I(j)| Pr(x_{i,i'}) \nt 
			\leq \frac{\c (\alpha - 2) \left( 2^{3-\alpha} \right)^2}{ c n^{1-1/\alpha-2\epsilon} } (1+\smallO(1)). \nt
	\end{align}
Since the nodes of $\U(j)$ are assigned to the cells independently,
we use Chernoff bounds \cite{Che52} to obtain the desired result. With $(1+\delta)\mu=1$ and $X$ being the random variable describing the number of newly infected nodes, we obtain
	\begin{align}
		Pr[X \geq (1+\delta)\mu] &\leq \left[ \frac{e^\delta}{(1+\delta)^{(1+\delta)}} \right]^\mu 
		\leq \left( \frac{e^{\nicefrac{1}{\mu}-1}}{(\nicefrac{1}{\mu})^{\nicefrac{1}{\mu}}} \right)^\mu = \mu \cdot \frac{e}{e^\mu} = n^{-\Omega(1)}. \nt
	\end{align}
\end{proof}
\vspace{10pt}
In the next theorem, we show that at least a polynomial fraction of the population remains uninfected, 
even if no countermeasures are taken.

\begin{theorem}\label{theorem1}
Let $\c = 1$ and let $\tau$ be a slow growing function in $n$ (i.e., $\tau = \smallO (\log n)$). Then, a polynomial fraction of 
the population remains uninfected, when the spreading process runs out.
\end{theorem}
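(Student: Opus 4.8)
The plan is to follow the two populations $|\U(j)|$ and $|\I(j)|$ over time and show that the spreading process is forced to stop while $|\U|$ is still a polynomial fraction of $n$. Two facts drive the argument. The first is a bookkeeping identity: once the initially infected nodes have recovered (so for every round $j>\tau$) every node of $\I(j)$ left $\U$ during exactly one of the rounds $j-\tau,\dots,j-1$, so the per-round losses telescope,
\[
|\I(j)| \;=\; |\U(j-\tau)|-|\U(j)| \;\le\; |\U(j-\tau)| ,
\]
and since $|\U|$ is non-increasing, the infective set is never larger than the susceptible set was $\tau$ rounds earlier. The second fact --- the step I expect to be the real obstacle --- is a ``safe cell'' estimate: at every round a fixed susceptible node $i'$ escapes infection with probability at least some absolute constant $q=q(\alpha,\c)>0$, \emph{no matter how large $|\I(j)|$ is}. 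The point is that the total attractiveness $W=\sum_v d_v$ of the grid is only $\Theta(\c n)=\Theta(n)$ (because $\sum_{d\ge 2}d^{1-\alpha}$ converges for $\alpha>2$), so there are $\Theta(n)$ cells of minimum attractiveness, $i'$ lands in one of them with constant probability, and, conditioned on that, each of the at most $n$ infective nodes shares that cell only with probability $\Theta(1/n)$ and independently of the others, so the cell is infection-free with probability $(1-\Theta(1/n))^{n}=\Theta(1)$. As the susceptible nodes choose their cells independently given the positions of $\I(j)$, a Chernoff bound upgrades this to $|\U(j+1)|\ge (q/2)\,|\U(j)|$ w.h.p.\ whenever $|\U(j)|=\omega(\log n)$ --- i.e.\ $|\U|$ can shrink by at most a constant factor per round.

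Given these, I would finish as follows. Fix a constant $\epsilon$ with $0<\epsilon<(1-1/\alpha)/2$. If $|\U|$ never drops to $n^{\epsilon}$ then a polynomial fraction trivially survives, so assume it does and let $T^{*}\ge 1$ be the first round with $|\U(T^{*})|\le n^{\epsilon}$ (hence $|\U(T^{*}-1)|>n^{\epsilon}$). For every round $k\ge T^{*}+\tau$, the identity above together with the monotonicity of $|\U|$ gives
\[
|\I(k)|\cdot|\U(k)| \;\le\; |\U(k-\tau)|^{2} \;\le\; |\U(T^{*})|^{2} \;\le\; n^{2\epsilon} ,
\]
so Lemma~\ref{lemma1} applies and no node becomes infected in round $k$, w.h.p.; a union bound over the rounds in $[T^{*}+\tau,\,T^{*}+2\tau]$ shows $\I(T^{*}+2\tau+1)=\emptyset$, i.e.\ the process has terminated by round $T^{*}+2\tau+1$, w.h.p.

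It remains to bound how much of $\U$ is lost in the meantime. Iterating the safe-cell bound from round $T^{*}-1$ to round $T^{*}+2\tau+1$ costs a factor at most $(q/2)^{2\tau+2}$, so
\[
|\U(T^{*}+2\tau+1)| \;\ge\; (q/2)^{2\tau+2}\,|\U(T^{*}-1)| \;>\; (q/2)^{2\tau+2}\,n^{\epsilon} ,
\]
and since $\tau=\smallO(\log n)$ we have $(q/2)^{2\tau+2}=n^{-\smallO(1)}$, so at least $n^{\epsilon-\smallO(1)}\ge n^{\epsilon/2}$ nodes are still uninfected when the process runs out --- a polynomial fraction, as claimed. (The Chernoff steps chain legitimately over this window of $\smallO(\log n)\ll n$ rounds because the running lower bound on $|\U|$ stays above $n^{\epsilon/2}\gg\log n$ there, and the random stopping time $T^{*}$ is dealt with by conditioning on its value and summing the small failure probabilities.) The genuinely delicate point is the safe-cell estimate: the naive union bound that bounds the infection probability of $i'$ by $|\I(j)|\,Pr(x_{i,i'})$ is vacuous once $|\I(j)|$ reaches $\approx n^{1-1/\alpha}$ --- which already happens during the initial burst of the epidemic --- and it is precisely the heavy-tailed attractiveness distribution, concentrating almost all visits into a constant number of popular cells and leaving $\Theta(n)$ sparsely used cells as refuges, that keeps a constant fraction of the susceptibles alive through that burst.
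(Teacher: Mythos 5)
Your proof is correct and follows essentially the same three-stage plan as the paper's: the constant fraction of low-attractiveness, infection-free refuge cells guarantees $|\U|$ shrinks by at most a constant factor per round (the paper's Phase 1), a threshold-crossing argument at the first time $|\U|$ dips below $n^{\epsilon}$ reduces matters to the hypothesis $|\I|\cdot|\U|\le n^{2\epsilon}$ of Lemma~\ref{lemma1} (Phase 2), and that lemma then extinguishes the process within $\bigO(\tau)$ further rounds (Phase 3). Your bookkeeping identity $|\I(k)|\le|\U(k-\tau)|$ and the explicit constraint $\epsilon<(1-1/\alpha)/2$ needed for Lemma~\ref{lemma1} make the second stage cleaner than the paper's version, but the underlying argument is the same.
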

\begin{proof}
We analyze the procedure in three phases. The first phase contains only the 
increase in the number of infected nodes. However, we will ensure that a sufficient number of uninfected 
nodes will still be present, where this number may be very small compared to the size of the population. 
In the second phase we show that either the spreading process runs out at some time when 
the number of uninfected nodes is polynomial in $n$, or we will have a situation where the 
assumptions of Lemma \ref{lemma1} are fulfilled.
Consequently, we apply Lemma \ref{lemma1} in the third phase. 
Let $i$ denote the current time step, where it holds $i=0$ at the beginning, 
and let $t_1, t_2$ describe 
the step \textit{after} the corresponding phases 1 and 2 have ended.

\paragraph{Phase 1}
For this phase we assume, that no node becomes cured. Thus, all infected nodes remain 
infected during this phase, and carry the disease to all nodes they meet in the cells.
According to the power law distribution of the attractiveness, a constant fraction of 
cells will have attractiveness $d=2$. Remember that $\c = 1$, and hence the number of all cells 
is $n$ in total\footnote{The result can easily be extended to 
arbitrary constant $\c$.}. The number of cells not hosting any infected node can be modeled by a simple 
balls into bins game, and we obtain that the number of such cells is $\Theta(n)$ \cite{RS98}.
Then, the probability that an uninfected node remains uninfected in one step is 
also a constant. Since the uninfected nodes are assigned to the cells independently, 
we apply Chernoff bounds to conclude that as long as $|\U(i)| = n^{\Theta(1)}$, at least 
a constant fraction of $\U(i)$ remains uninfected after step $i$, with probability $1-n^{-\Omega(1)}$. 

\paragraph{Phase 2}
We now add the curing procedure to our analysis. Thus, $|\R(t_1+i)|$ will increase in each step. 
However, we know from the first phase that $|\U(j+1)| = \Theta(|\U(j)|)$ for any $j$. Then, the 
number of uninfected nodes will not decrease below some polynomial in $n$, or there is some step 
$j$ such that $|\I(j) \cup \R(j)| = n - n^{\epsilon'}$ and $|\U(j)| = n^{\epsilon'}$ for some 
$\epsilon' >0$ small enough. Assuming that $\tau =o(\log n)$, after $\tau$ steps it holds that 
$\R(i+\tau) = \I(i) \cup \R(i)$ and $|\U(i+\tau)| = |\U(i)|/n^{o(1)}$. Thus,
$$|\I(i+\tau)| \cdot |\U(i+ \tau)| \in \left( \frac{n^{\epsilon'}}{n^{o(1)}} , n^{2 \epsilon'} \right) ,$$
where we assumed that $\I(i+\tau) \neq \emptyset$. Thus, there is some constant $\epsilon>0$ small enough,
such that $|\I(i+\tau)| \cdot |\U(i+ \tau)| = n^{2 \epsilon}$.  

\paragraph{Phase 3}
Since at this point $|\I(t_2)| \cdot |\U(t_2)| \leq n^{2\epsilon}$, we apply Lemma \ref{lemma1} for $\tau$ steps, and obtain the theorem.
\end{proof}
\vspace{10pt}
Theorem \ref{theorem1} implies that if $\tau = o(\log n)$, the disease will run out, and a polynomial 
fraction of the population survives. This is surprising, given the fact that an aggressive virus 
with a long time frame for transmitting the infection (i.e., $\tau$ unbounded) is spread among the 
individuals of a population. Theorem \ref{theorem1} seems to explain the behavior of 
certain epidemics known from the history. We know that some of these epidemics 
exterminated a large fraction of the population of various cities. However, a fraction of the 
citizens could always survive (e.g. the plague \cite{Plague05}).

\subsection{Industrialized country model}\label{sec_theorem2}
In reality the most effective countermeasures against the spread of diseases 
(except vaccination) seem to be \textit{warnings} and \textit{isolation} \cite{Markel07}. 
When a new disease breaks out, the media warns the public and informs it 
about the risks. The most important facts are mentioned, including the ways of transmission 
and possible precautions. A few examples are the bird flu, swine 
flu, and SARS. In reality it is often not possible to isolate infected people at the  
time they become infected. Thus, such an infected individual may potentially infect others until 
his symptoms show up and he becomes isolated. We assume that such a disease has an incubation 
time not bigger than a constant $\s$, and model this fact by setting $\tau= \s$. Now it remains 
to include warnings spread through the media 
into our model. A warning basically affects 
the constants $\alpha$ and $\c$, since the individuals will most likely avoid 
places with a large number of persons, waive needless tours, and be more careful 
when meeting other people. 
Although these modifications alone are most likely not sufficient, we show that a combination of these strategies,
which are able to sufficiently influence the constants $\alpha$ and $\c$, 
are enough to stop the spread of the disease in our model.
Therefore, we assume in our analysis that $\alpha$ and $\c$ are large constants.

\begin{lemma}\label{lem_mart_theorem2}
Let the set $G_k$ contain all cells with attractiveness $2^k$ up to $2^{k+1} - 1$ and let 
$|\I(j)| = f^q(n)$ for a specific step $j$, where $f(n)$ is some function with $\lim_{n\rightarrow 
\infty} f(n) = \infty$ and $q > 3$ constant. 
If $k \leq 
\frac{1}{(\alpha-2)} \cdot \log \left( \frac{|\I(j)|}{f^3(n)} \right)$, then 
the number of newly infected nodes in $G_k$ is bounded by $\bigO\left( \frac{|\I(j)|}{(2^k-1)^{\alpha-3}} \right)$ 
with probability $1-\frac{1}{e^{\Omega(f(n))}}$.
\end{lemma}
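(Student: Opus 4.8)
I would expose the randomness of step $j$ in two stages -- first the cells chosen by the nodes of $\I(j)$, then, conditionally, the cells chosen by the nodes of $\U(j)$ -- and apply a Chernoff bound after each stage. As in the proof of Lemma~\ref{lemma1}, I treat the number of cells of attractiveness $d$ as $c\,\c n/d^\alpha$, the total of all attractivenesses as $\Theta(\c n)$, and hence the probability that a fixed node picks a fixed cell of attractiveness $d$ as $\Theta(d/(\c n))$. ``Newly infected in $G_k$'' means a node of $\U(j)$ that lands in a cell of $G_k$ which also holds a node of $\I(j)$; I take $k\ge1$, as $G_0$ is empty (attractivenesses are $\ge2$).

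\emph{Stage~1: the infected mass reaching $G_k$.} Let $Z$ count the nodes of $\I(j)$ that choose a cell of $G_k$. The block $[2^k,2^{k+1})$ has $2^k$ attractiveness values, each of which a fixed node hits with probability $\Theta((2^k)^{1-\alpha})$, so a fixed node lands in $G_k$ with probability $\Theta((2^k)^{2-\alpha})$ and $\mathbb{E}[Z]=\Theta(|\I(j)|(2^k)^{2-\alpha})$. The hypothesis $k\le\frac{1}{\alpha-2}\log(|\I(j)|/f^3(n))$ is exactly $(2^k)^{\alpha-2}\le|\I(j)|/f^3(n)$, so $\mathbb{E}[Z]=\Omega(f^3(n))$. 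Since the nodes of $\I(j)$ move independently, Chernoff gives $Z=\bigO(|\I(j)|(2^k)^{2-\alpha})$ with probability $1-e^{-\Omega(f^3(n))}$; on this event the set $D_k\subseteq G_k$ of cells with at least one infected node satisfies $|D_k|\le Z=\bigO(|\I(j)|(2^k)^{2-\alpha})$.

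\emph{Stage~2: the uninfected mass reaching $D_k$.} Condition on the positions of $\I(j)$, so $D_k$ and the bound on $|D_k|$ are fixed. A node is newly infected in $G_k$ iff it lies in $\U(j)$ and lands in $D_k$; since each cell of $D_k$ has attractiveness below $2^{k+1}$, a fixed node does so with probability $\bigO(|D_k|\,2^k/(\c n))$, so the number of newly infected nodes in $G_k$ has conditional expectation at most $n\cdot\bigO(|D_k|\,2^k/(\c n))=\bigO(|D_k|\,2^k)=\bigO(|\I(j)|(2^k)^{3-\alpha})$. Put $T:=|\I(j)|(2^k)^{3-\alpha}$; then $T=2^k\cdot|\I(j)|(2^k)^{2-\alpha}=\Omega(f^3(n))$. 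As the nodes of $\U(j)$ move independently given $D_k$, a second Chernoff bound makes the number of newly infected nodes in $G_k$ at most $\bigO(T)$ with conditional probability $1-e^{-\Omega(f^3(n))}$ (the multiplicative tail handles both a conditional mean of order $T$ and a much smaller one). A union bound over the two failure events, plus $\bigO(|\I(j)|(2^k)^{3-\alpha})=\bigO(|\I(j)|/(2^k-1)^{\alpha-3})$ (valid for $k\ge1$ as $2^k-1\ge2^{k-1}$ and $\alpha$ is constant) and $e^{-\Omega(f^3(n))}\le e^{-\Omega(f(n))}$, gives the lemma. Here $q>3$ enters only to make the range of admissible $k$ non-empty for large $n$, since then $|\I(j)|/f^3(n)=f^{q-3}(n)\to\infty$.

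The main obstacle is the dependence among the ``newly infected'' indicators of distinct uninfected nodes: they are coupled through the common random positions of $\I(j)$, so one cannot Chernoff the uninfected population in one shot; exposing the infected positions first is what decouples them. The second delicate point is the extra factor $2^k$ gained between the stages -- each infected node sitting in a $G_k$-cell exposes an expected $\Theta(2^k)$ uninfected nodes in that cell -- which is precisely what turns the Stage~1 exponent $2-\alpha$ into the claimed $3-\alpha$. Everything else is bookkeeping: the hypothesis on $k$ is chosen exactly so that all the relevant expectations are $\Omega(f^3(n))$, well above $f(n)$, hence every Chernoff failure probability is truly $e^{-\Omega(f(n))}$.
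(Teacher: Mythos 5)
Your argument is correct, but it takes a genuinely different route from the paper's. Both proofs start the same way: the expected number of nodes of $\I(j)$ choosing cells of $G_k$ is $\Theta\left(|\I(j)|\,(2^k)^{2-\alpha}\right)$, which the hypothesis on $k$ forces to be $\Omega(f^3(n))$. From there the paper sets up a vertex-exposure martingale on the bipartite ``co-location'' graph between $\I(j)$ and the remaining nodes, bounds the martingale differences by showing each exposed infected vertex has $\bigO(f(n))$ incident edges (this is where the requirement that $\alpha$ be large enough enters, via $2^k = \smallO(f(n))$ for all admissible $k$), and then applies Azuma's inequality with deviation $\lambda = |\I(j)|/(\c(2^k-1)^{\alpha-2})$ around the mean $X_0 = \bigO\left(|\I(j)|/(2^k-1)^{\alpha-2}\right)\cdot(2^{k+1}-1)$. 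Your two-stage exposure --- fix the positions of $\I(j)$ first, then apply a multiplicative Chernoff bound to the conditionally independent choices of $\U(j)$ --- reaches the same $\bigO\left(|\I(j)|/(2^k-1)^{\alpha-3}\right)$ bound without any martingale machinery and without a per-node degree bound; as a consequence you never need $2^k=\smallO(f(n))$, and your failure probability is in fact $e^{-\Omega(f^3(n))}$, stronger than the $e^{-\Omega(f(n))}$ demanded. Your opening observation is also exactly the right diagnosis of why a one-shot Chernoff bound fails (the uninfected indicators are coupled through the infected positions), which is the same dependence the paper's martingale is designed to absorb. The martingale route would be the more robust one if the per-cell contributions were harder to control (e.g.\ probabilistic transmission or heavier-tailed occupancies), but for this lemma your conditioning argument is the more elementary and self-contained of the two; the only point worth stating explicitly is that $\alpha>3$ (true in the industrialized-country regime where the lemma is applied) is what makes the target bound decreasing in $k$ and makes the passage from $(2^k)^{3-\alpha}$ to $(2^k-1)^{3-\alpha}$ immediate.
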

\begin{proof}
According to our assumption, the expected number of infected nodes at the beginning of 
step $j$, which choose cells in group $G_k$ with 
$k \leq 
\frac{1}{(\alpha-2)} \cdot \log \left( \frac{|\I(j)|}{f^3(n)} \right)$, is 
$$\sum_{d=2^k}^{2^{k+1} - 1} \frac{|\I(j)| d}{\sum_{i=2}^{\sqrt[\alpha]{\c n}} c \frac{\c n}{i^\alpha}i}
= \Theta\left( \frac{|\I(j)|}{(2^k-1)^{\alpha-2}} \right) = \Omega (f^3(n)) .
$$
To prove the lemma, we formulate the problem as a vertex exposure martingale \cite{MR07}. 
Let $z_{a,b}$ be the event that node $a$ and $b$ choose the same cell. We 
define a graph $G=(V, E)$ by setting $V=\I(j) \cup L$, where $L=\{ 1, ..., n - |\I(j)| \}$ 
and $\I(j) \cap L = \emptyset$. The set of edges is $E=\{ (x_i, l) \mid x_i \in \I(j) \wedge l \in 
L \wedge z_{x_i, l} \}$. Let now be the vertex exposure sequence given by $x_1,...,x_{\I(j)}
\in \I(j)$. 
Thus, each $x_i$ represents an infected node which may establish edges connecting $x_i$ to the set $L$. 
By standard Chernoff bounds \cite{Che52,HR90} it follows that if $\alpha$ is large enough, i.e., 
$2^k = 2^{\nicefrac{1}{(\alpha-2)} \cdot \log \left( \nicefrac{\I(j)}{f^3(n)} \right)} =\smallO(f(n))$, then
an infected node has $\bigO(f(n))$ edges with probability 
$1- \nicefrac{1}{e^{\Omega(f(n))}}$ 
By using the union bound, we may conclude that with probability $1- \frac{f^q(n)}{e^{\Omega(f(n))}}$ 
all infected nodes placed in cells of the group $G_k$ have $\bigO(f(n))$ edges.

Now we restrict the probability space to the events, in which  
$\bigO\left( \frac{|\I(j)|}{(2^k-1)^{\alpha-2}} \right)$ infected nodes 
choose cells in $G_k$, and each of these nodes has $\bigO(f(n))$ edges.
We obtain such an event with probability $1-e^{-\Omega(f(n)}$.
Let $X_0, X_1, ...$ be the vertex exposure martingale, where $X_i$ is the 
expectation on the number of edges incident to $V \cap \I_{G_k}(j)$, conditioned by the knowledge of the edges 
incident to $x_1, \dots , x_i$, where $\I_{G_k}(j)$ is the set of infected nodes choosing cells in $G_k$.
%
Note that we only expose the infected vertices lying in $G_k$.
Given the assumption above, 
for any $k\leq 
\frac{1}{(\alpha-2)} \cdot \log \left( \frac{|\I(j)|}{f^3(n)} \right)$ 
we have $|X_k - E[X_k]| \leq \bigO(f(n))$. Then,
\begin{align*}
&Pr(|X_{|\I_{G_k}(j)|}| - X_0| \geq \lambda) \leq 2\cdot e^{-\frac{\lambda2}{2\sum_{k=1}^{t} f(n)2}} 
\leq \exp{\left( -\frac{1}{\c} \frac{f^3(n)}{\bigO( f^2(n))} \right) },
\end{align*}
with $\lambda = \frac{|\I(j)|}{\c(2^k -1)^{\alpha-2}} \geq \frac{f^3(n)}{\c}$.
Since the expected 
number of uninfected nodes in a cell of $G_k$ is less than $2^{k+1}-1$, we obtain $X_0 = 
\bigO\left(\frac{|\I(j)|}{(2^k-1)^{\alpha-2}}\right) \cdot 
(2^{k+1}-1) = \bigO\left(\frac{|\I(j)|}{(2^k-1)^{\alpha-3}}\right)$, and the lemma follows.
\end{proof}

\begin{theorem}\label{theorem2}
Let an epidemic disease be spread in a $\sqrt{\c n} \times \sqrt{\c n}$ network
as described in section \ref{model_annotation}. Furthermore, let 
$\tau = \s$, where $\c$, $\alpha$, and $\s$ are some large constants. Then, the 
network is healthy again after $\bigO((\log\log n)^4)$ steps, 
with probability $1-\smallO(1)$. Moreover, when $\I= \emptyset$, 
the set $\R$ has size $\log^{O(1)} n$.  
\end{theorem}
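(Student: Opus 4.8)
The plan is to show that, once the countermeasures have fixed $\alpha,\c$ (and, via $\tau=\s$, the infective period) to large constants, the infection behaves like a \emph{sub-critical} process: in every step the number of newly infected nodes is at most a fixed small fraction of $|\I(j)|$, so that as soon as the recoveries after $\s$ steps take effect, $|\I|$ shrinks geometrically, dies out after $(\log\log n)^{O(1)}$ steps, and has touched only $\log^{O(1)}n$ nodes altogether. The engine is a one-step estimate obtained by combining Lemma~\ref{lem_mart_theorem2} with a direct computation for the very sparse cells and a tail bound for the very attractive ones.

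\textbf{Step 1: a one-step shrinking estimate.} First I would fix a step $j$ and partition the cells into the groups $G_k$ of Lemma~\ref{lem_mart_theorem2}. The lowest group $G_1$ (attractiveness $2$ and $3$) is handled directly: such a cell holds only $\bigO(1/\c)$ nodes in expectation, so the expected number of uninfected nodes sharing a cell with a given infected node inside $G_1$ is $\bigO(1/\c)$, whence the expected number of nodes newly infected inside $G_1$ is $\bigO(|\I(j)|/\c)$ --- as small as we wish for $\c$ a large constant, and concentrated by Chernoff when $|\I(j)|=\omega(1)$. For $2\le k\le k^{*}:=\frac{1}{\alpha-2}\log(|\I(j)|/f^3(n))$, Lemma~\ref{lem_mart_theorem2} bounds the number newly infected in $G_k$ by $\bigO\!\left(|\I(j)|/(2^k-1)^{\alpha-3}\right)$ w.h.p.; since $\alpha-3>1$, summing over $k\ge 2$ gives $\bigO(|\I(j)|/3^{\alpha-3})$, negligible for large $\alpha$. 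For $k>k^{*}$, Observation~\ref{lem_area_d} says a node picks a cell of attractiveness $\ge 2^{k^{*}}$ with probability $\Theta\!\big((2^{k^{*}})^{2-\alpha}\big)=\Theta(f^3(n)/|\I(j)|)$, so w.h.p.\ only $\bigO(f^3(n))$ infected nodes fall into cells of attractiveness $\ge 2^{k^{*}}$; since the choice-probabilities decay geometrically in $k$, w.h.p.\ none of them sits in a cell of attractiveness exceeding roughly $|\I(j)|^{1/(\alpha-2)}$, and the same geometric-sum / martingale estimate used in Lemma~\ref{lem_mart_theorem2} then bounds their total contribution by $\smallO(|\I(j)|)$ as long as $|\I(j)|=f^q(n)$ with $q>3$ (and $\alpha$ large). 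Adding the three contributions, with probability $1-e^{-\Omega(f(n))}$ the number of nodes newly infected at step $j$ is at most $\gamma\cdot|\I(j)|$, where $\gamma=\gamma(\alpha,\c)$ can be forced below $1/(2\s)$.

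\textbf{Step 2: decay, epochs, and the end-game.} Because a node is infective for exactly $\tau=\s$ steps, $|\I(j)|$ is bounded by the number of nodes infected during the preceding $\s$ steps (plus the $\log^{O(1)}n$ nodes infected initially, which recover after $\s$ steps). Combined with Step~1 this forces $|\I|$ to drop by a constant factor over each window of $\s$ steps, \emph{provided} $|\I|$ is still of the form $f^q(n)$ with $f\to\infty$ so that the concentration in Step~1 is meaningful. I would run this in epochs, re-choosing $f$ in each epoch to match the current (shrinking) value of $|\I|$, which decreases from $\log^{O(1)}n$ down through the $(\log\log n)^{O(1)}$ range. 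Below that range Step~1 no longer applies, and I would close with a crude union bound: every infected node causes $<\gamma<1$ further infections in expectation, so by Markov together with a union bound over the $(\log\log n)^{O(1)}$ currently infected nodes and over the remaining steps, the residual process is dominated by a sub-critical branching process and becomes extinct within $\bigO((\log\log n)^4)$ steps. A union bound over all $\bigO((\log\log n)^4)$ steps and all $\bigO(\log n)$ groups keeps the total probability that any invoked high-probability statement fails at $\smallO(1)$. On the good event, summing $\mathrm{new}_i\le\gamma|\I(i)|$ over all steps and using $|\I(i)|\le\sum_{i'=i-\s+1}^{i}\mathrm{new}_{i'}$ gives $\sum_i\mathrm{new}_i=\bigO(|\I(0)|)=\log^{O(1)}n$, so when $\I=\emptyset$ the set $\R$ has size $\log^{O(1)}n$.

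\textbf{The main obstacle} is, throughout, the contribution of highly attractive cells: a single infected node there infects $\Theta(\sqrt[\alpha]{\c n})=n^{\Omega(1)}$ nodes at once, which would wreck every bound. The argument must therefore show that, w.h.p., no infected node ever chooses such a cell --- this is where the $D^{2-\alpha}$ tail of Observation~\ref{lem_area_d}, combined with the fact that only $\log^{O(1)}n$ nodes are ever infected, does the work --- and that the ``medium'' cells, into which a few infected nodes unavoidably fall, produce only a lower-order number of new infections, which is precisely the quantitative content of Lemma~\ref{lem_mart_theorem2}. Carrying that lemma's hypothesis ``$|\I(j)|=f^q(n)$, $q>3$'' through the entire shrinking process --- re-choosing $f$ epoch by epoch, and switching to the union-bound end-game once $|\I|$ is too small for concentration --- is the remaining bookkeeping hurdle, and is what ultimately yields the $\bigO((\log\log n)^4)$ bound on the running time.
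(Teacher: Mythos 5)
Your Step~1 is essentially the paper's Part~1: the same decomposition into groups $G_k$, the same use of Lemma~\ref{lem_mart_theorem2} below the cutoff $k^*$, and a separate tail argument for the very attractive cells, yielding a per-step bound of $\gamma|\I(j)|$ newly infected nodes with $\gamma\s\ll 1$. The genuine gap is in Step~2, at the sentence ``a union bound over all $\bigO((\log\log n)^4)$ steps \dots keeps the total probability \dots at $\smallO(1)$.'' The one-step shrinking estimate only holds with probability $1-\Theta(|\I(j)|^{-1/q})$ (equivalently $1-e^{-\Omega(f(n))}$ with $f^q(n)=|\I(j)|$), so once $|\I(j)|$ has dropped to $(\log\log n)^{O(1)}$ --- and it must pass through that range --- the per-step failure probability is only polylog-log-small, and summed over the $\Theta((\log\log n)^4)$ remaining steps the union bound is $\omega(1)$, not $\smallO(1)$. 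On each such failure event $|\I|$ can jump back up toward $\log^{q}n$ (one infected node landing in a very attractive cell), so the trajectory is not monotone and ``re-choosing $f$ epoch by epoch'' does not close the argument. This recurrent upward oscillation is exactly what the paper's Part~2 is for: it models $|\I|$ as a random walk on $\{\log\log\log\log n,\dots,\log^q n\}$ with state-dependent failure probability $2\s v^{-1/q}$ and proves a dedicated Markov-chain lemma (Lemma~\ref{hilfslemmatheo}, adapted from \cite{DGMSS11}) showing the walk still reaches the bottom state within $\bigO((\log\log n)^2)$ epochs with probability $1-\log^{-4q}n$, despite the failures. Your proposal has no substitute for this ingredient.

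The end-game is also different from the paper's and under-justified as written. You assert domination by a subcritical branching process, but never verify the key fact that the \emph{expected} number of new infections per infected node per step --- \emph{including} the heavy-tailed contribution of cells of attractiveness up to $\sqrt[\alpha]{\c n}$, which your high-probability truncations deliberately exclude --- is a constant below $1/\s$; that computation (a convergent sum $\sum_d d^{2-\alpha}$, using $\alpha>3$ and large $\c$) is the crux of any expectation/Markov route and does not follow from Lemma~\ref{lem_mart_theorem2}. The paper instead finishes combinatorially (Part~3): whenever $|\I|\le\log\log\log\log n$, all infected nodes simultaneously choose empty cells for $\tau$ consecutive steps with probability $e^{-\Theta(\tau|\I|)}$, and the $(\log\log n)^2$ visits to the bottom state guaranteed by Part~2 supply enough independent trials for this to happen at least once with probability $1-\log^{-\Omega(1)}n$ --- which is where the $(\log\log n)^4$ bound actually comes from. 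If you do want to pursue the branching-process route, it could in principle replace both Part~2 and Part~3 (an unconditional bound $E[X_j\mid\I(j)]\le\gamma|\I(j)|$ plus the linear recursion induced by $\tau=\s$ gives geometric decay of $E[|\I|]$ and extinction by Markov), but then the high-probability machinery of Step~1 is largely beside the point; as proposed, your argument sits between the two approaches and completes neither.
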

\begin{proof}
The proof consists of three parts. In the first part we show that the number of infected 
nodes decreases after $\s$ steps by at least a constant factor, 
with probability 
$1-2\s |\I(j)|^{-1/q}$, where $q$ is a constant. 
The second part states a result about the oscillating behavior (w.r.t.~the number 
of infected nodes) during the whole process. Finally the third part shows, that 
if the number of informed nodes is just small enough 
(although dependent on $n$), it is sufficient to consider $\bigO((\log \log n)^4)$ additional 
steps to eliminate the remaining infected nodes, 
with probability $1- \log^{-\Omega(1)}(n)$. 
Then, all parts together imply the validity of the theorem.

\paragraph{Part 1}
Let $\nu$ be a proper upper bound on the number of newly infected nodes in the network. 
This bound holds with probability $1-\bigO(|\I(j)|^{-1/q})$ at step $j$ and will be computed later. 
By setting $\tau = \s$ the infected nodes will carry the infection further 
only for a constant time $\s$. Let a super-step be a sequence of $\s$ consecutive steps. We know that
after a super-step the nodes which were infected at the beginning of this super-step, 
become cured (i.e., they are moved to the set $\R$). 
Let $j \geq \s$ be an arbitrary time step.
Then,
	\begin{align}
		|\I(j)| &\leq |\I(j-\s)|(1+\nu)^\s - |\I(j-\s)| 
		       \leq |\I(j-\s)| \left( (1+\nu)^\s - 1 \right). \EQlabel{1}
	\end{align}
	
To obtain a value for $\nu$ we ask 
how many different cells become occupied by the infected nodes at most. 
Therefore, we group the cells with respect to their attractiveness.
The group $G_k$ contains all cells with attractiveness 
$2^k$ up to $2^{k+1} - 1$. Let $x_{ij}$ describe the number of nodes, which are infected by 
$i \in \I(j)$ in step $j$, and define $X_j:=\sum_{\forall i \in \I(j)} x_{ij}$.
Let us first assume that $|\I(j)| = \log^q n$. 
Note that if $\alpha$ is large enough (cf.~Lemma \ref{lem_mart_theorem2}), the expected 
number of infected nodes contained in all the cells of attractiveness at least 
$2^{\frac{1}{\alpha - 2} \log\left( \frac{|\I(j)|}{\log^3 n} \right)}$ 
is bounded by $\bigO(\log^3 n)$, with probability $1-\rho^{-\log(n)}$, where $\rho$ is some proper 
constant. 
If the attractiveness is larger than $l = 2^{\frac{1}{\alpha - 2}\log\left( |\I(j)| 
\log^{1+\Theta(1)} n \right)}$, then the expected number of infected nodes 
in all these cells together is $\smallO(\log^{-3} n)$.

Let $E_{k,n}$ be the expected number of infected nodes 
contained in $G_k$. Note that $E_{k,n}$ decreases with the number of infected 
nodes. According to Lemma \ref{lem_mart_theorem2}, if $f(n) = \log n$, then 
the number of newly infected 
nodes in $G_k$ in some step $j$ is $\bigO\left( \frac{|\I(j)|}{(2^k-1)^{\alpha-3}} \right)$, with probability 
$1-n^{-\Omega(1)}$. Thus, 
we obtain 
	\begin{align}
		 X_j &\leq \sum\limits_{k=2}^{\frac{1}{\alpha -2} \log\left( \frac{|\I(j)|}{\log^3 n} \right)} \bigO \left( \frac{|\I(j)|}{(2^k -1)^{\alpha -3}} \right)
				 + \bigO(\log^3 n) \cdot \left( 2 \cdot l -1 \right) \nt\\
			 &\leq {c'} \cdot |\I(j)| \left( 1+ \frac{\bigO( \log^{1+\Theta(1)}n )}{|\I(j)|^{1-\frac{1}{\alpha-2}}} \right)
			 = {c'} \cdot |\I(j)| \left( 1+ \smallO(1) \right), \EQlabel{2}
	\end{align}
with probability $1-\smallO(\log^{-1} n)$. Note that considering the attractiveness up to $l$ only is crucial here, since 
no infected node is in some cell with attractiveness larger than $l$ with probability $1-\smallO(\log^{-1} n)$. 
In this context $c'$ is  a small constant depending on 
$\alpha$ and $\c$.
For decreasing $|\I(j)|$ we obtain the general formulas where $\log(n)$ is replaced by $f(n)$ in the statements above. 
Note that at this point one can represent $|\I(j)|$ as $f^q(n)$. Then 
inequality (\EQref{2}) holds with 
probability $1-|\I(j)|^{-1/q}$. Thus, $\nu = {c'}$ with the probability given above.

Then, it holds that
\begin{align}
		\left( (1+\nu)^\s - 1 \right) = \sum\limits_{k=0}^{\s} \left( \binom{\s}{k} 1^{\s-k} \nu^k \right) - 1 = \sum\limits_{k=1}^{\s} \binom{\s}{k} 1^{\s-k} \nu^k
		\leq \frac{(\nu \s) \frac{(1-\nu)^{\s-1}}{(1-1/\s)^{\s-1}}}{(1-\nu)^{\s-1}} 
		\leq \nu \s e^{\frac{\s-1}{\s}} << 1, \EQlabel{3}
\end{align}
if $c'\s << 1/e$. Here, the first inequality in (\EQref{3}) holds using the following estimation from \cite{HR90}
	\begin{align}
	 \sum\limits_{k=1}^{\s} \binom{\s}{k} 1^{\s-k} \nu^k (1-\nu)^{\s-1} 
		\leq \left( \frac{\nu}{1/\s} \right)^{\frac{1}{\s}\s} \left( \frac{1-\nu}{1-1/\s} \right)^{(1-1/\s)\s}. \nt
	\end{align}
Then (\EQref{1}), (\EQref{2}) and (\EQref{3}) imply a decreasing number of infected nodes after each super-step by a factor of $\nu \s e$, with probability at least $1-2\s |\I(j)|^{-1/q}$.

\paragraph{Part 2}
The results so far imply an oscillating behavior of the number of infected nodes. That is, 
the set $\I$ will mainly decrease (cf.~Part 1). However, the number of infected nodes 
may also increase in some time steps. If one infected node chooses a destination with plenty of nodes from 
$\U$, then $\I$ increases drastically. Thus, it remains to compute the probability for such an 
event. 

Now we divide the whole process into {\em phases}. Lets consider each phase separately, and 
assume that a specific phase begins at time step $j$. A phase consists usually of $\s$ steps, and a step is 
divided into two substeps. In the first substep of a step $j'$, we allow each infected node to transmit the 
disease to every other node being in the same cell. In the second substep, all nodes 
which were infected in step $j'-\s$ are moved to $\R$, and stop transmitting the disease in 
the subsequent steps. 

Now, a phase starting at some time step $j$ ends after $\s$ steps, if in all these steps 
$j' \leq \s$ it holds that $|\I(j+j') \setminus \I(j)| \leq \nu \s e |\I(j)|$. This holds with 
probability at least $1-2\s |\I(j)|^{-1/q}$ (see above).
If in some step $j' \leq \s$
$|\I(j+j') \setminus \I(j)| > \nu \s e |\I(j)|$, then this phase ends, and in the next step we start 
with a new phase. 

Now we model the process as a special random walk (cf. Figure \ref{fig_rw}). For this, let $G'=(V',E')$ be a directed graph,
where $V'= \{ \log \log \log \log n , \dots , \log^{q} n\}$.
A node $v \in V'$ corresponds to the case $|\I| = v$. From each $v$, there is a transition to $\max\{v_{\min}, \nu \s e v\}$ 
with probability $1-2\s v^{-1/q}$, where 
$v_{\min} = \log \log \log \log n$ (all the other transitions are not relevant and can be arbitrary).
In order to show that within $\bigO ((\log \log n)^4)$ steps the vertex $v_{\min}$ 
is visited $(\log \log n)^2$ steps, we state the following lemma.
\begin{lemma}
\label{hilfslemmatheo}
Let $(X_t)_{t=1}^{\infty}$ be a Markov chain with space $\{1, \dots , m\}$ that fulfills the 
following property:
\begin{itemize}
\item there are constants $c_1,c_2 <1$ such that for any $t \in \N$,
$Pr[X_{t+1} \leq c_1 X_t] \geq 1-X_t^{-c_2}.$
\end{itemize}
Let $T= \min\{t \in \N ~|~ X_t =1\}$. Then, 
$$Pr[T =\bigO(\log^2 m)] \geq 1-m^{-4}.$$
\end{lemma}
\begin{proof}
The proof follows from some of the arguments of Claim 2.9 from \cite{DGMSS11}. However, there 
are two main differences. First, the probability of not moving toward the target vertex (which is $m$ 
in their case and $1$ in our case)
is in \cite{DGMSS11} exponentially small w.r.t.~the current state $-X_t$, while we only have some polynomial 
probability in $X_t^{-1}$. Second, in their case the failure in moving toward the 
traget vertex most likely occurs at the other end of the graph (close to $1$), 
while in our case a failure mainly occurs at some state, which is close to 
the target. 

We define a step to be decreasing, if $X_{t+1} \leq c_1 X_t$. Furthermore, a decreasing 
step is successful, if $X_{t+1} =1$ after some $X_t>1$. Let $Y$ be a random variable which 
denotes the number of consecutive decreasing steps without reaching a successful step.
Note that $t=\log_{1/c_1} m $ is an upper bound on $Y$ since 
$X_{t} \leq c_1^{\log_{1/c_1} m} m =1$. 

We divide now the Markovian process $X_t$ into consecutive epochs. Every time we 
fail to have a decreasing step, we start a new epoch. Furthermore, if a step is 
successful, we stop. Now we show that an epoch is successful with 
some constant probability. Let $P_i$ be the probability that 
some epoch $i$ contains a successful step. Then,
\begin{align}
P_i &\geq \prod_{j=0}^{\log_{1/c_1} m-1} \left( 1- (c_1^j m)^{-c_2}\right)
\geq  \prod_{j=0}^{\log_{1/c_1} m-1} \rho^{-(c_1^j m)^{-c_2}}
= \rho^{-a} \nt
\end{align}
where $\rho$ is a proper constant such that $1 < \rho \leq \left( 1- c_1^{c_2} \right)^{-c_1^{-c_2}}$. Then for the exponent $a$ we have
\begin{align}
a = \sum_{j=0}^{\log_{1/c_1} m-1} (c_1^j m)^{-c_2}
= \sum_{j=0}^{\log_{1/c_1} m-1} \frac{(1/c_1^{c_2})^j}{m^{c_2}}
= \frac{1-(1/c_1^{c_2})^{\log_{1/c_1} m}}{(1-1/c_1^{c_2})m^{c_2}}
= \frac{m^{c_2}-1}{(1/c_1^{c_2}-1)m^{c_2}} = \Theta(1). \nt
\end{align}

Let $T'$ be the first epoch in which we reach a successful step. As in \cite{DGMSS11}, we know that an epoch 
has such a step with constant probability, and hence
$Pr[T' = \bigO (\log m)] \geq 1-m^{-4}$. 
Since an epoch can only last for $\bigO (\log m)$ steps, we obtain that a successful 
step is reached within $\bigO (\log^2 m)$ steps, with probability $1-m^{-4}$. 
\end{proof}

According to the lemma above, in the graph $G'$, node $v_{\min}$ is visited at least once 
within $\bigO((\log \log n)^2)$ phases, with probability $1-\log^{-4q} n$. Furthermore, if $|\I(j)| =
\log^{q} n$ for some $j$, then $|\I(j+\s)| =  \nu \s e |\I(j)|$ with probability $1-2\s\log^{-1} n$.
Therefore, using the union bound we conclude that 
$v_{\min}$ is visited $(\log\log n)^2$ times by the random walk within $\bigO((\log \log n)^4)$ phases, without visiting $\log^{q} n$
twice in any two consecutive phases, with probability $1-\smallO(\log^{-0.99} n)$.

\paragraph{Part 3}
\begin{figure}[!t]
\begin{center}
\includegraphics[width=3in]{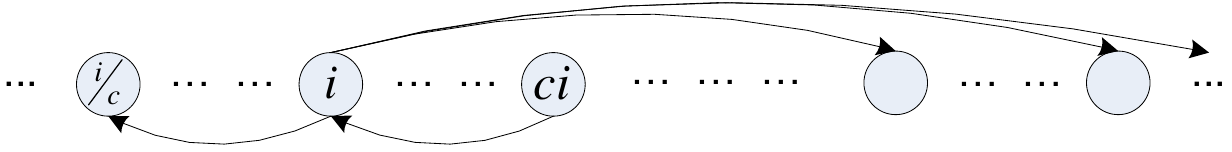}
\caption{Model of the random walk for part 2 of Theorem \ref{theorem2}, where $c=(\nu\s e)^{-1}$. Thereby the edges were partially plotted for the node $i$.}
\label{fig_rw}
\end{center}
\end{figure}

In order to conclude, we observe that within one step, $|\I|$ does not increase to some value larger 
than $\log^q n$ with probability $1-\smallO(\log^{-1} n)$ (cf.~inequality (\EQref{2})). Furthermore, 
given that  $|\I|$ is always smaller than $\log^q n$,
we know that within $\bigO((\log \log  n)^4)$ steps, with probability $1-\smallO(\log^{-0.99} n)$ 
there are $\Theta(\log^2 \log n)$ time steps, in which the number of infected 
nodes is at most $\log \log \log \log n$. To show that the disease becomes 
eliminated from the system, we 
consider the probability for these nodes not to meet any other node for $\tau$ 
consecutive time steps.

To show our claim, we only consider time steps in which the number of infected nodes is at most 
$\log \log \log \log n$. First, we assign the uninfected nodes to the cells. 
As in Lemma \ref{lem_cell_d}, the probability for a node to choose a specific cell with constant attractiveness 
is proportional to $\frac{\bigO(1)}{\c n}$. Thus, at least a constant fraction  
of the cells will remain empty, since the probability for a cell with constant attractiveness to
be empty is $\left( 1- \nicefrac{\bigO(1)}{\c n} \right)^n = e^{-\Theta(1)}$.
Let $t=\log^2\log n$ be the number of so called phases, 
where each phase consists of $\tau$ steps. Then, an (infected) node chooses an empty cell for
$\tau$ consecutive steps with probability $e^{-\Theta(\tau)}$ and all nodes of $\I$ choose 
empty cells in $\tau$ consecutive steps with probability $e^{-\Theta(\tau |\I(j)|)}$. 
Then, in all $t$ phases there is at least one node $v \in \I$ which 
does not choose an empty cell in at least one of the steps with probability
	\begin{align}
		\left( 1- \frac{1}{e^{\Theta(\tau |\I(j)|)}} \right)^t \leq \exp{\left( -\frac{t}{e^{ \Omega\left( |\I(j)| \right)}} \right)} \leq \log^{-\Omega(1)} n. \EQlabel{6}
	\end{align}
Hence, after $\bigO(\log^4\log n)$ steps there is at least one phase, in which all nodes of $\I$ spend $\tau$ consecutive steps alone in some 
cells, with probability $1-\log^{-\Omega(1)} n$.

Thus, the network becomes completely healthy after $\bigO((\log \log  n)^4)$ 
steps with probability $1-\smallO(1)$, and the theorem follows.
\end{proof}

\section{Conclusion}\label{conclusion}
We presented a model, which describes a simple movement behavior of individuals as well as the impact of certain countermeasures 
on the spread of epidemics in an urban environment. Two different parameter settings were used for the analysis. In the first case the epidemic can spread nearly unhindered. The obtained result for this case implies that w.r.t. our model a part of the population will survive with probability $1-\smallO(1)$.
In the second case the epidemic is combated by public warnings, isolation and (limited) medications. One can observe, that in this case the epidemic is embanked after a short time with probability $1-\smallO(1)$. Furthermore the number of total infections decreases from a high percentage of the population to a negligible fraction.

Nevertheless, several open questions remain. In our model, we assumed that every node chooses a cell with attractiveness $d$ with probability proportional do ${d^{-\alpha+1}}$. However, different individuals may have different preferences which are not included in our analysis. Furthermore, different types of movement models are conceivable like Levy flight, periodic mobility model, and grid like movement. 
Although all these characteristics are not considered in this paper, our methods and techniques might be useful to analyze more realistic movement models in the future.


\newpage
\bibliographystyle{splncs03}
\bibliography{Bibliography}

\end{document}